\documentclass[onecolumn,floatfix,ams,nofootinbib]{revtex4}
\usepackage[dvips]{epsfig}
\usepackage[english]{babel}
\usepackage{bbm}
\usepackage{verbatim}
\usepackage{array}
\usepackage{bm} 
\usepackage{amsmath}
\usepackage{amsthm}
\usepackage{amsfonts}
\usepackage{amssymb}
\newtheorem{definition}{Definition}
\usepackage{hyperref}
\usepackage{cases}
\usepackage[dvipsnames]{xcolor}
\hypersetup{colorlinks=true,linkbordercolor=Blue,linkcolor=Blue, citecolor=Blue}
\urlstyle{same}
\usepackage{indentfirst} 
\usepackage[titletoc]{appendix}
\usepackage{subeqnarray}
\usepackage{setspace}
\usepackage{indentfirst} 

\usepackage{gensymb}
\usepackage{xcolor}
\usepackage{calrsfs}

\usepackage{wasysym}
\usepackage[all,cmtip]{xy}

\pagestyle{myheadings}

    \usepackage{amsmath,amsfonts,amssymb,amsthm,mathrsfs,bbm,braket}

    \newtheorem{prop}{Proposition}[section]

    \numberwithin{equation}{section}

\usepackage{bbold}

\begin{document}

\title{Deformations in spinor bundles: Lorentz violation and further physical implications} 

\author{J. M. Hoff da Silva\footnote{Corresponding author.}} 
\email{julio.hoff@unesp.br}
\affiliation{Departamento de F\'isica, Universidade
Estadual Paulista, UNESP, Av. Dr. Ariberto Pereira da Cunha, 333, Guaratinguet\'a, SP,
Brazil.}

\author{R. T. Cavalcanti} 
\email{rogerio.cavalcanti@ime.uerj.br}
\affiliation{Departamento de Matem\'atica Aplicada, Universidade do Estado do Rio de Janeiro, UERJ, Rua São Francisco Xavier, 524, Maracanã, Rio de Janeiro, RJ , Brazil.}
\affiliation{Departamento de F\'isica, Universidade
Estadual Paulista, UNESP, Av. Dr. Ariberto Pereira da Cunha, 333, Guaratinguet\'a, SP,
Brazil.}

\author{G. M. Caires da Rocha} 
\email{gabriel.marcondes@unesp.br}
\affiliation{Departamento de F\'isica, Universidade
	Estadual Paulista, UNESP, Av. Dr. Ariberto Pereira da Cunha, 333, Guaratinguet\'a, SP,
	Brazil. }


\begin{abstract}
This paper delves into the deformation of spinor structures within nontrivial topologies and their physical implications. The deformation is modeled by introducing real functions that modify the standard spinor dynamics, leading to distinct physical regions characterized by varying degrees of Lorentz symmetry violation. It allows us to investigate the effects in the dynamical equation and a geometrized nonlinear sigma model. The findings suggest significant implications for the spinor fields in regions with nontrivial topologies, providing a robust mathematical approach to studying exotic spinor behavior.
\end{abstract}		

\maketitle

\section{Introduction}

It is well known that spinors play a central role in modern theoretical physics and geometry \cite{1,PS}. Spinor fields, which are sections of spinor bundles \cite{var2}, are essential in describing fermionic fields, for example. Among the different approaches to describing such objects, it should be stressed the one proposed by \'Elie Cartan \cite{4}. Cartan's approach views spinors as projections from the Riemann sphere, obtained by intersecting the light cone at a specific time onto the complex plane. This allows a correspondence between spacetime points $(t, x, y, z)$ and spinors entries $(\zeta,\eta)$,  such that $(t, x, y, z) = (t(\zeta,\eta), x(\zeta,\eta), y(\zeta,\eta), z(\zeta,\eta))$.  However, some intriguing phenomena arise when the bundles are defined over manifolds with nontrivial topology. These include the possibility of having nonequivalent spinor structures \cite{pet,6}, which can lead to the emergence of exotic spinors with distinct physical properties from their standard counterparts \cite{var3,8,var4,np}.

Our motivation here stems from the need to understand how deformations in spinor bundles, induced by the underlying topology of the manifold, affect the dynamics of the corresponding spinor fields. These deformations are characterized by introducing a smooth, real function $\varphi(x)$, which modulates the transition functions between different coordinate patchs on the manifold. The deformation function $\varphi(x)$, which varies smoothly across the manifold, plays a crucial role in distinguishing different physical regions and determining the extent to which Lorentz and Poincar\`e symmetries are possibly violated. It is of particular interest in both theoretical and experimental physics, as they can have potentially observable consequences in, but not restricted to, high-energy physics \cite{8,var4,np,11}.

The study begins with a review of the fundamental mathematical tools required to understand spinor structures on manifolds, including the algebraic and topological foundations of spinor bundles, mainly focusing on the conditions under which multiple, nonequivalent spinor structures can exist. This discussion is grounded in the framework of Clifford algebras and the Spin group, which are pivotal in defining spinor representations in different dimensions and signatures. By examining the interplay between the algebraic properties of spinors and the topological characteristics of the manifold, we lay the groundwork for analyzing the deformations introduced by the $\varphi(x)$ function.

A key aspect of our investigation is the partitioning of the base manifold $\mathcal{M}$ into three distinct regions: $\Sigma_1$, a multiply connected region with nontrivial topology; $\Sigma_2$ is a region subtly influenced by the deformation function; and $\Sigma_3$, a region with constant deformation function $\varphi(x)$. This partitioning allows for a precise analysis of how the deformation affects the spinor fields in different manifold parts. In particular, the behavior of the deformation function $\varphi(x)$ across these regions provides insights into how the topological properties of $\mathcal{M}$ influence the physical properties of the spinor fields.

The formalism adopted in this paper is further applied to study the physical consequences of these deformations. We explore how the deformation function $\varphi(x)$ affects the spinor current's conservation analysis, the spinor fields' dynamics, and the potential violations of Lorentz symmetries. The results of this analysis are not only of theoretical interest but also have potential implications for understanding the behavior of fermions in curved spacetimes \cite{12,13,penro}, as well as in scenarios involving symmetry-breaking mechanisms in high-energy physics \cite{15,16}.

This work is organized as follows: In Sections II and III, we shall delve into the mathematical fundamentals of the proposed deformation, which provides a foundation for understanding the interplay between topology, algebra, and geometry in the context of spinor bundles. Section IV explores some relevant implications of the inequivalence of spinor structures and the resulting physical phenomena. This section is divided into two parts: the first explores kinematic and dynamic effects through quasinormal-like behavior and the Gordon decomposition. In contrast, the second investigates the geometrical implications of nontrivial topology by analyzing a nonlinear sigma model. In Section V, we conclude. Finally, in the Appendix, we discuss relevant aspects of the Hausdorff measure supporting the considerations made in Section IV. We stress that in Section III and the Appendix, Greek letters denote exclusively open sets, while in Section II, their usage may be straightforwardly inferred from the context.

\section{Basic tools and deformation}

In this section, we shall start giving a very brief account of spinor structures within the nontrivial topology context (for a more comprehensive presentation of the preliminary tools, see \cite{var,var2,var3,var4}), and after we present the basic steps of the proposed deformation.  
 
Let $\mathcal{M}$ be a base manifold and $\bigcup_{\alpha\in \mathbb{N}^*} U_\alpha$ a covering of $\mathcal{M}$ by open sets. If $\mathcal{M}$ has a nontrivial topology, more than one nonequivalent spinor structure may arise. For instance, for all $x\in U_\alpha\cap U_\beta\subset \mathcal{M}$ there are at least two mappings, say $h_{\alpha\beta}(x)$ and $\tilde{h}_{\alpha\beta}(x)$, from the intersection above to\footnote{Please see Ref. \cite{KA} for the adequacy of classes for pseudo-Riemannian manifolds of any dimension.} $Spin(1,3)$. The mappings are related by $\tilde{h}_{\alpha\beta}(x)=h_{\alpha\beta}(x)c_{\alpha\beta}(x)$, where $c_{\alpha\beta}(x)\in \mathbb{Z}_2\hookrightarrow Spin(1,3)$ are the transition functions. In this way, there are also at least two orthonormal coframe fiber bundles of which nonequivalent spinor fields are sections, i.e., $\sec P_{Spin(1,3)}\times_\sigma \mathbb{C}^4\ni \psi$ and $\sec \tilde{P}_{Spin(1,3)}\times_\sigma \mathbb{C}^4\ni \tilde{\psi}$, where $\sigma=\{(1/2,0)\oplus(0,1/2),(1/2,0),(0,1/2)\}$. The spinor $\tilde{\psi}$ is called exotic. The $Spin(1,3)$ group may be understood as the subgroup of invertible elements in $Cl(1,3)$. Hence, consider $\bar{\rho}$ as the linear, faithful and invertible map $\bar{\rho}:Cl(1,3)\rightarrow M(4,\mathbb{C})$ and since\footnote{Where $\mathbb{H}$ stands for the quaternions and $Cl^+(1,3)$ denotes the even sub Clifford algebra.} $\mathbb{C}\subset \mathbb{H}\simeq Cl^+(1,3)\subset Cl(1,3)$, admit the existence of $U(1)$ functions \cite{pet} $\xi_\alpha:U_\alpha\rightarrow \mathbb{C}$ such that 
\begin{eqnarray}
\quad \qquad \qquad \qquad \qquad 
\xi_\alpha(x)=\bar{\rho}(c_{\alpha\beta}(x))\xi_\beta(x),\qquad \forall \; x\!\in U_\alpha\cap U_\beta. \label{s0}
\end{eqnarray} Given that $c_{\alpha\beta}=\pm e$ and $\bar{\rho}$ is faithful, then $\bar{\rho}(c_{\alpha\beta}(x))=\pm 1$. Besides, locally, spinors are also connected (in the same fiber bundle) by $\bar{\rho}(h_{\alpha\beta})$ \cite{var3}
\begin{equation}
\psi_\alpha=\bar{\rho}(h_{\alpha\beta})\psi_\beta
\end{equation} and similarly for exotic spinors, that is $\tilde{\psi}_\alpha=\bar{\rho}(\tilde{h}_{\alpha\beta})\tilde{\psi}_\beta$, from which $\tilde{\psi}_\alpha=\bar{\rho}(h_{\alpha\beta}c_{\alpha\beta})\tilde{\psi}_\beta$. Since $\bar{\rho}$ is a linear mapping with a range in a group, one has $\bar{\rho}(ab)=\bar{\rho}(a)\bar{\rho}(b)$ and, therefore, 
\begin{equation}
\tilde{\psi}_{\alpha}=\bar{\rho}(h_{\alpha\beta})\bar{\rho}(c_{\alpha\beta})\tilde{\psi}_\beta.\label{s1}
\end{equation} Now, from Eq. (\ref{s0}), we get $\bar{\rho}(\xi_\beta)=\bar{\rho}(c_{\alpha\beta})\bar{\rho}(\xi_\alpha)$ and, by means of Eq. (\ref{s1}), we are left with
\begin{equation}
\bar{\rho}(\xi_\alpha)\tilde{\psi}_\alpha=\bar{\rho}(h_{\alpha\beta})\bar{\rho}(\xi_\beta)\tilde{\psi}_\beta.
\end{equation} Comparing this last expression with $\psi_\alpha=\bar{\rho}(h_{\alpha\beta})\psi_\beta$ we have $\psi_\alpha=\bar{\rho}(\xi_\alpha)\tilde{\psi}_\alpha$ and suppressing the (same) open set index we have the mapping $\psi=\bar{\rho}(\xi)\tilde{\psi}$, that is:
\begin{eqnarray}\label{a0}
\bar{\rho}(\xi)&:&\left.\tilde{P}\rightarrow P \right.\nonumber\\ &&\left.\tilde{\psi}\mapsto \bar{\rho}\tilde{\psi}=\psi.\right.
\end{eqnarray} 

Now, it is straightforward to see that 
\begin{eqnarray}\label{a1}
\gamma^\mu\partial_\mu\psi=\gamma^\mu\partial_\mu\bar{\rho}\tilde{\psi}+\gamma^\mu\bar{\rho}\partial_\mu\tilde{\psi}.
\end{eqnarray} One can tentatively define $B:\tilde{P}\rightarrow\tilde{P}$ acting as a (kind of) compensating form to guarantee the right transformation of the spinor field derivative. The point is that the spinor derivative is also a spinor and, as such, must retain the transformation law (\ref{a0}). Hence, by setting $\nabla_\mu=\partial_\mu+B_\mu$ we have 
\begin{eqnarray}\label{a2}   
\bar{\rho}\gamma^\mu\nabla_\mu\tilde{\psi}=\bar{\rho}\gamma^\mu\partial_\mu\tilde{\psi}+\bar{\rho}\gamma^\mu B_\mu\tilde{\psi}.
\end{eqnarray} Subtracting Eq. (\ref{a1}) from (\ref{a2}) yields 
\begin{eqnarray}\label{a3}
\bar{\rho}\gamma^\mu\nabla_\mu\tilde{\psi}-\gamma^\mu\partial_\mu\psi=[\bar{\rho},\gamma^\mu]\partial_\mu\tilde{\psi}+(\bar{\rho}\gamma^\mu B_\mu-\gamma^\mu\partial_\mu\bar{\rho})\tilde{\psi}.
\end{eqnarray} The imposition $\bar{\rho}\gamma^\mu\nabla_\mu\tilde{\psi}=\gamma^\mu\partial_\mu\psi$ (just as $\bar{\rho}\tilde{\psi}=\psi$) can now be appreciated from the vanishing of Eq. (\ref{a3}) right-hand side. It is worthwhile to mention that, by now, the scalar and vector bilinear covariants of the spinor and its exotic counterpart are related by 
\begin{eqnarray}
\bar{\psi}\psi=\bar{\tilde{\psi}}(\gamma^0\bar{\rho}^\dagger\gamma^0\bar{\rho})\tilde{\psi},\nonumber\\
j^\mu=\bar{\tilde{\psi}}(\gamma^0\bar{\rho}^\dagger\gamma^0\gamma^\mu\bar{\rho})\tilde{\psi}.\label{a4}
\end{eqnarray} 

We shall cast sufficient conditions to the vanishing of (\ref{a3}). Firstly, 
\begin{eqnarray}
\hspace{1.5cm} 
[\bar{\rho},\gamma^\mu]=0, \hspace{.5cm} \forall \mu,\label{a5}
\end{eqnarray} from which $[\gamma^\mu,\bar{\rho}^{-1}]=0=[\gamma^\mu,\bar{\rho}^\dagger]$ are immediate, as well $\bar{\psi}\psi=\bar{\tilde{\psi}}|\bar{\rho}|^2\tilde{\psi}$ and $j^\mu=\bar{\tilde{\psi}}|\bar{\rho}|^2\gamma^\mu\tilde{\psi}$ for a mostly negative metric signature. Then, taking into account (\ref{a5}), it is possible to write the $B$ form coefficient as $B_\mu=\bar{\rho}^{-1}\partial_\mu\bar{\rho}$ and hence
\begin{equation}
\nabla_\mu=\partial_\mu+(\bar{\rho})^{-1}\partial_\mu\bar{\rho}. \label{cor}
\end{equation} At this point, taking a step back for a last consideration on the open set index is convenient. First, as we shall discuss in a moment, $\bar{\rho}$ will suffer a deformation by a globally defined function $\varphi(x)$ so that $\bar{\rho}=\varphi(x)\rho_\alpha$. Notice that $\rho_\alpha^2=\rho_\beta^2$ in $U_\alpha\cap U_\beta$ and hence it is worth working with the squares of $\rho_\alpha$ since they provide an everywhere continuous function\footnote{We are denoting $\rho(\xi_\alpha)$ by $\rho_\alpha$.}. It is then said that $c_{\alpha\beta}$ are generated by the square roots of $\rho_\alpha^2$. Calling $\rho_\alpha^2\equiv\rho$ we have, for the $B_\mu$ form in Eq. (\ref{cor}), 
\begin{equation}
\rho_\alpha^{-1}\partial_\mu\rho_\alpha=\rho^{-1}\rho_\alpha(\partial_\mu\rho\,\rho_\alpha^{-1}+\rho\partial_\mu\rho_\alpha^{-1}),
\end{equation} yielding $\rho_\alpha^{-1}\partial_\mu\rho_\alpha=\frac{1}{2}\rho^{-1}\partial_\mu\rho$. Therefore, we shall replace Eq. (\ref{cor}) with the following general expression valid everywhere
\begin{equation}
\nabla_\mu=\partial_\mu+\varphi^{-1}\partial_\mu\varphi+\frac{1}{2}\rho^{-1}\partial_\mu\rho. \label{cor2}
\end{equation}

Usually, the solution to (\ref{a5}) involves expressing $\rho$ in terms of convenient $U(1)$ functions, $\xi(x)=\exp^{i\theta(x)}$, along with the invocation of Schur's lemma (thinking of gamma matrices as a base to Clifford algebras) to set the matrix status of $\rho$ as given by the identity in $M(4,\mathbb{C})$. This is indeed a typical Spin transformation, leading to  $\bar{\psi}\psi=\bar{\tilde{\psi}}\tilde{\psi}$ and the current invariance \cite{lich}. That is precisely the point of departure in our approach, i.e., as already mentioned, we shall deform the $\rho$-mapping by inserting a smooth real function $\varphi(x):U\subset\mathcal{M}\rightarrow (0,1]$ whose behavior shall set distinct physical regions (using appropriate conditions). The central point in this proposed deformation is that with $\varphi(x)$ the mapping $\rho$ is not unimodular. In light of our exposition, the price to pay is a violation of Lorentz symmetries everywhere $\varphi(x)\neq 1$. In the course of developments, it is only natural to study the violation of Poincar\`e symmetries. This last part of the deformation is performed by a complementary behavior of the $\theta(x)$ function, helping to localize (or limit) the effects of nontrivial topology; this is a subtle aspect whose implementation borders on mathematical malpractice, but we shall discuss that in a moment, shortly after deepening studies on current conservation. 

Being $\bar{\rho}(x)=\varphi(x)\exp^{i\theta(x)}\mathbb{1}_{M(4,\mathbb{C})}$, we have 
\begin{equation}
	\partial_\mu j^\mu=\partial_\mu \varphi^2(x)\bar{\tilde{\psi}}\gamma^\mu\tilde{\psi}+\varphi^2(x)\partial_\mu\bar{\tilde{\psi}}\gamma^\mu\tilde{\psi}+\varphi^2(x)\bar{\tilde{\psi}}\gamma^\mu\partial_\mu\tilde{\psi}. \label{a6}
\end{equation} By the same token, the dynamical equation obeyed by the exotic spinor reads $i\gamma^\mu(\partial_\mu+\varphi^{-1}\partial_\mu\varphi+\frac{1}{2}\rho^{-1}\partial_\mu\rho)\tilde{\psi}-m\tilde{\psi}=0$, and therefore it is possible to find 
\begin{equation}
\gamma^\mu\partial_\mu\tilde{\psi}=-\frac{i}{2}\gamma^\mu\partial_\mu\theta(x)\tilde{\psi}-\varphi^{-1}(x)\gamma^\mu\partial_\mu \varphi(x)\tilde{\psi}-im\tilde{\psi}\label{a8} 
\end{equation} and similarly 
\begin{equation}
\partial_\mu\bar{\tilde{\psi}}\gamma^\mu=\frac{i}{2}\partial_\mu\theta(x)\bar{\tilde{\psi}}\gamma^\mu-\varphi^{-1}(x)\bar{\tilde{\psi}}\gamma^\mu\partial_\mu \varphi(x)+im\bar{\tilde{\psi}}.\label{a9} 
\end{equation} Reinserting Eqs. (\ref{a8}) and (\ref{a9}) back into (\ref{a6}) it is readily verified that $\partial_\mu j^\mu=0$, as expected. 

Here is a good point to call attention to the fact that, despite the form of $\rho^{-1}\partial_\mu \rho$ correction in (\ref{cor}), there is no gauge transformation absorbing it. The reason is the following: in the dynamical equation for the exotic spinor (\ref{a8}), we have two additional terms, one coming from the topological correction and another from the mapping deformation. The term $\gamma^\mu\partial_\mu\theta(x)$ cannot be eliminated by a gauge transformation since there is no gauge function, say $\Lambda$, with the topological $\theta(x)$ function properties (according to the previous discussion) \cite{pet,CI}, that is, $\Lambda\sim \ln\rho_\alpha$. On the other hand, for the second term $i\varphi^{-1}(x)\gamma^\mu\partial_\mu\varphi(x)$, it is clear that a tentative gauge function would have to be complex to absorb the deformation properly. Therefore, both contributions survive. 

We shall finalize this section by setting the picture of topological non-triviality we are interested in this paper: consider a partition of the base manifold $\mathcal{M}$ into three regions, $\Sigma_1$ which is multiply connected (to fix ideas one could consider $\Sigma_1 \supseteq (\mathbb{R}^2\times S^1)\times \mathbb{R}$), $\Sigma_2\simeq \mathbb{R}^{1,3}\backslash (\Sigma_1\cup \Sigma_3)\equiv \,^\varphi\mathbb{R}^{1,3}$ a region with trivial topology but in which there is still some effect coming from the deformation procedure, and $\Sigma_3\simeq \mathbb{R}^{1,3}$ a usual part $\mathcal{M}$. In order to localize, so to speak, the nontrivial topology region and separate the deformation effects, we found suitable the following boundary conditions: 

\begin{enumerate}
	\item $\theta(x)$ varying and $\varphi(x)=1$, $\forall\, x\in\, \Sigma_1$. 
	
In this nontrivial topology region the local square roots of $\rho(x)$ may change sign abruptly by crossing surfaces $S\subset U_\alpha\cap U_\beta$, that is $\xi_\alpha/\xi_\beta=\exp^{i(\theta_\alpha-\theta_\beta)}=\pm 1$ and the fermionic current is constant.    

	\item $\theta(x)$ constant and $\varphi(x):U\subset\mathcal{M}\rightarrow (0,1)$ varying in general,  $\forall\, x\in\, ^\varphi\mathbb{R}^{1,3}$. 
	
In this region, we have a deformed mapping $\tilde{P}_{Spin(1,3)}\times_\sigma\mathbb{C}^4\underbrace{\longrightarrow}_{\rho\circ\varphi\circ x} P_{Spin(1,3)}\times_\sigma\mathbb{C}^4$	but as $\theta$ is constant we have a lack of contribution of this term to the dynamical operator. From the physical point of view, however, it seems unreasonable that a given area with nontrivial topology shall affect the fermionic behavior everywhere. Then, we shall impose $\theta$ becoming gradually constant as it reaches the boundary of $\Sigma_1$. We also shall take the deformation $\varphi(x)$ as a slowly varying function. Physically, it means a handle Lorentz symmetry violation and is implemented by requiring that $\partial_\mu\ln{\varphi(x)}$ is a constant vector of low magnitude. The impact of this region awkwardness study is the main focus of this paper. We shall furnish a mathematical criteria of slowly varying $\varphi(x)$ in the Appendix.  

	\item $\theta(x)$ and $\varphi(x)$ are constants $\forall\, x\in\, \Sigma_3$.    
	
Here, the eventual residue of $\varphi$ constant in the bilinear covariants shall also play a hole in the analysis (see Section IV). One could argue that a simple spinor redefinition may absorb it. However, we shall frame the discussion based upon spinors whose normalization coincides if, and only if, the deformation is restored to $1$. 
\end{enumerate}	         

After properly characterizing the deformation mathematically in the next section, we shall further discuss in Section IV the current conservation given the separation between spacetime regions proposed above. 

\section{Mathematical Characterization}

We shall introduce some standard notions to proceed with a formal analysis in a general fashion after what we contextualize to the case at hand. This presentation (more general and robust) is more likely for our purposes. We parallel part of Ref. \cite{var} conceptualization. 

Let $\Gamma$ and $H$ be two topological groups and $f:\Gamma\rightarrow H$ a continuous homomorphism with $\ker{(f)}=\mathbb{Z}_2\subset Z(\Gamma)$, where $Z(G)$ denotes de center\footnote{Of course, for $\Gamma=Spin(1,3)$ and $H=SO(1,3)$ we have a direct contact with the previous section.} of $H$. Consider a principal bundle $P=(\mathcal{M},\pi,H)$. A $\Gamma$-structure in $P$ is a principal bundle (with $\Gamma$ as structure group), say $^\Gamma\!P$, along with an equivariant mapping 
\begin{eqnarray}
\eta&:&\left. \!\,^\Gamma \!P\rightarrow P\right.\nonumber\\ &&\left. 
\!\!\! 
l\cdot\gamma \mapsto\eta(l\cdot\gamma)=\eta(l)\cdot f(\gamma),\right.\label{n1}
\end{eqnarray} for $l\in \,^\Gamma \!P$ and $\gamma\in\Gamma$. A principal bundle admits $\Gamma$-structure if, and only if, there are continuous maps\footnote{Not to be confused with the other section's Dirac matrices.} $\gamma_{\alpha\beta}:U_\alpha\cap U_\beta\subset\mathcal{M}\rightarrow \Gamma$ such that $\gamma_{\alpha\beta}$ obey the consistency condition $\gamma_{\alpha\beta}(x)\gamma_{\beta\rho}(x)=\gamma_{\alpha\rho}(x)$ for $x\in U_\alpha\cap U_\beta \cap U_\rho$ and the composition $f\cdot\gamma_{\alpha\beta}$ leads to the transition functions in $P$ \cite{st}. Therefore, in general (whenever possible) transition functions $U_\alpha\cap U_\beta\rightarrow H$ are lifted to $\gamma_{\alpha\beta}:U_\alpha\cap U_\beta\rightarrow \Gamma$ and $\gamma_{\alpha\beta}$ are called the lifting of the structure group. 

As a final ingredient, consider $P$ a principal bundle with structure group $H$ and let be $\lambda:H\rightarrow H'$ a homomorphism. Let $\bigcup_{\alpha\in \mathbb{N}^*} U_\alpha$ be a suitable covering of the base manifold in $P$ with a complete system of local sections $s_\alpha$ and transition functions $t_{\alpha\beta}$. By defining $t'_{\alpha\beta}=\lambda \cdot t_{\alpha\beta}$, one has $t'_{\alpha\beta}(x)t'_{\beta\rho}(x)t'_{\rho\alpha}(x)=\lambda e_{H}=e_{H'}$, so that there is a corresponding principal bundle $P'$ with $t'_{\alpha\beta}$ as transition functions. The bundle $P'$ is said to be a $\lambda$-extension of $P$.  

Now we can construct criteria for equivalence of spinor structures and adapt it to our context. Consider $f$ as before and admit that $P$ has two $\Gamma$-structures, $\tilde{P}$ and $\tilde{P}'$. Let be $\gamma_{\alpha\beta}:U_\alpha\cap U_\beta\rightarrow \Gamma$ and $\gamma'_{\alpha\beta}:U_\alpha\cap U_\beta\rightarrow \Gamma'$ transition functions and set maps 
\begin{eqnarray}
\delta_{\alpha\beta}&:&\left.U_\alpha\cap U_\beta\rightarrow \Gamma\right.\nonumber\\ &&\left. x \mapsto \gamma_{\alpha\beta}(x)\gamma'\,^{-1}_{\alpha\beta}(x).\right.\label{u1}
\end{eqnarray} As a last step, define the homomorphism (the analog of $\lambda$ for $\Gamma$ and $\Gamma'$) $\bar{\lambda}:\Gamma'\rightarrow\Gamma$ by $\gamma_{\alpha\beta}(x)=\bar{\lambda}\gamma'_{\alpha\beta}(x)$. 
\begin{prop} 
	$\delta_{\alpha\beta}\in \mathbb{Z}_2$. 
\end{prop}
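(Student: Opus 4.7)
The plan is to exploit the fact that both $\Gamma$-structures $\tilde{P}$ and $\tilde{P}'$ sit above the same principal $H$-bundle $P$, which forces their transition functions to coincide after projection via $f$; the kernel condition on $f$ then pins down $\delta_{\alpha\beta}$.

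First, I would recall the characterization of a $\Gamma$-structure stated just before the proposition: the maps $\gamma_{\alpha\beta}$ and $\gamma'_{\alpha\beta}$ are lifts of the transition functions $t_{\alpha\beta}$ of $P$, in the sense that $f\circ\gamma_{\alpha\beta}=t_{\alpha\beta}$ and $f\circ\gamma'_{\alpha\beta}=t_{\alpha\beta}$ on each overlap $U_\alpha\cap U_\beta$. This is the only non-cosmetic input; it encodes that $\tilde P$ and $\tilde P'$ are structures on the \emph{same} principal bundle $P$ (equivalently, that the equivariant maps $\eta$, $\eta'$ of \eqref{n1} fit over a common set of $H$-valued transition data).

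Second, I would apply $f$ to $\delta_{\alpha\beta}(x)=\gamma_{\alpha\beta}(x)\gamma'^{-1}_{\alpha\beta}(x)$ and use the homomorphism property together with the lifting condition above to obtain
\begin{equation*}
f\bigl(\delta_{\alpha\beta}(x)\bigr)=f\bigl(\gamma_{\alpha\beta}(x)\bigr)\,f\bigl(\gamma'_{\alpha\beta}(x)\bigr)^{-1}=t_{\alpha\beta}(x)\,t_{\alpha\beta}(x)^{-1}=e_H.
\end{equation*}
Thus $\delta_{\alpha\beta}(x)\in\ker(f)=\mathbb{Z}_2$ for every $x\in U_\alpha\cap U_\beta$. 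Finally, since $\delta_{\alpha\beta}$ is continuous (being a product of continuous maps into a topological group) and takes values in the discrete subgroup $\mathbb{Z}_2\subset\Gamma$, it is automatically locally constant on each overlap, which is the usual way the conclusion $\delta_{\alpha\beta}\in\mathbb{Z}_2$ is understood in bundle-theoretic language.

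I do not expect a genuine obstacle here: the argument is a one-line cocycle computation, and the only subtlety is notational. One must be careful that the expression $\gamma_{\alpha\beta}\gamma'^{-1}_{\alpha\beta}$ really makes sense in a common group; this is ensured either by reading both structures as lifts to the same $\Gamma$ (as the preceding paragraph in the paper indicates) or, if one takes the homomorphism $\bar\lambda:\Gamma'\to\Gamma$ literally, by tacitly replacing $\gamma'_{\alpha\beta}$ with $\bar\lambda\,\gamma'_{\alpha\beta}$ before multiplying. Either interpretation reduces to the same kernel computation, and the centrality of $\mathbb{Z}_2$ in $Z(\Gamma)$ ensures that the resulting cochain $\delta_{\alpha\beta}$ is insensitive to ordering and behaves as a $\mathbb{Z}_2$-valued \v Cech object — a fact that will be used in the subsequent (in)equivalence analysis of the two spinor structures.
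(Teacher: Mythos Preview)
Your proof is correct and follows essentially the same approach as the paper: apply $f$ to $\delta_{\alpha\beta}$ and conclude it lies in $\ker(f)=\mathbb{Z}_2$. The only cosmetic difference is that the paper invokes the relation $\gamma_{\alpha\beta}=\bar\lambda\,\gamma'_{\alpha\beta}$ to reduce $f(\delta_{\alpha\beta})$ to $f(\bar\lambda\,e_{\Gamma'})=f(e_\Gamma)=e_H$, whereas you obtain the same identity directly from the fact that both cocycles lift the common $H$-valued transition functions $t_{\alpha\beta}$; you even anticipate this alternative in your closing remarks.
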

\begin{proof}
From the definition of $\delta_{\alpha\beta}(x)$, $f(\delta_{\alpha\beta}(x))=f(\gamma_{\alpha\beta}(x)\gamma'\,^{-1}_{\alpha\beta}(x))$, or $f(\delta_{\alpha\beta}(x))=f(\bar{\lambda}e_{\Gamma'})=f(e_{\Gamma})=e_{H}$ and therefore $\delta_{\alpha\beta}(x)\in \ker{(f)}$. 
\end{proof}
In this way, the $\delta_{\alpha\beta}$ mappings define a 1-cochain, say $\delta(\alpha,\beta)(x)$, for $x\in U_\alpha\cap U_\beta$. 

\begin{prop}
 $\delta$ is a cocycle. 
\end{prop}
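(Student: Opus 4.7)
The plan is to verify the cocycle identity $\delta_{\alpha\beta}(x)\,\delta_{\beta\rho}(x)=\delta_{\alpha\rho}(x)$ on triple intersections $U_\alpha\cap U_\beta\cap U_\rho$, exploiting two already-established facts: the consistency conditions $\gamma_{\alpha\beta}\gamma_{\beta\rho}=\gamma_{\alpha\rho}$ and $\gamma'_{\alpha\beta}\gamma'_{\beta\rho}=\gamma'_{\alpha\rho}$ enjoyed by the two sets of lifted transition functions, together with the conclusion of Proposition~3.1 that $\delta_{\alpha\beta}(x)\in\ker(f)=\mathbb{Z}_2\subset Z(\Gamma)$.

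First I would rewrite the definition (\ref{u1}) in the more convenient form $\gamma_{\alpha\beta}(x)=\delta_{\alpha\beta}(x)\,\gamma'_{\alpha\beta}(x)$ and substitute it into the consistency condition for the $\gamma$-functions. Doing this on both sides produces
\begin{equation*}
\delta_{\alpha\beta}(x)\,\gamma'_{\alpha\beta}(x)\,\delta_{\beta\rho}(x)\,\gamma'_{\beta\rho}(x)=\delta_{\alpha\rho}(x)\,\gamma'_{\alpha\rho}(x).
\end{equation*}
The essential move is now to commute $\delta_{\beta\rho}(x)$ past $\gamma'_{\alpha\beta}(x)$, which is allowed precisely because Proposition~3.1 places $\delta_{\beta\rho}(x)$ in the center of $\Gamma$. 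After this commutation, the consistency condition for the $\gamma'$-functions collapses $\gamma'_{\alpha\beta}(x)\gamma'_{\beta\rho}(x)$ into $\gamma'_{\alpha\rho}(x)$, and multiplying by $\gamma'^{-1}_{\alpha\rho}(x)$ on the right yields the desired identity $\delta_{\alpha\beta}(x)\,\delta_{\beta\rho}(x)=\delta_{\alpha\rho}(x)$.

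Since $\delta_{\alpha\beta}(x)$ takes values in $\mathbb{Z}_2$, which is abelian, this multiplicative relation is equivalent to the Čech 1-cocycle condition $\delta_{\alpha\beta}\,\delta_{\beta\rho}\,\delta_{\rho\alpha}=e$, so nothing further is required. I would close by noting briefly that this 1-cocycle represents a cohomology class in $H^1(\mathcal{M},\mathbb{Z}_2)$ whose vanishing (i.e., the $\delta_{\alpha\beta}$ being a coboundary) is the obstruction to equivalence of the two $\Gamma$-structures, tying the statement to the discussion that preceded it.

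I do not anticipate any real obstacle: the proof is a one-line manipulation once centrality is invoked. The only pitfall to watch is being careful not to implicitly assume commutativity of $\Gamma$ itself — the argument must use only that $\delta_{\beta\rho}(x)$ (not the $\gamma'$'s) lies in the center, which is exactly the content of the previous proposition.
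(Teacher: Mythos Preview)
Your proof is correct, but it follows a somewhat different line from the paper's. The paper expands $\partial\delta(\alpha,\beta,\kappa)=\delta_{\beta\kappa}\delta_{\alpha\kappa}^{-1}\delta_{\alpha\beta}$ by substituting $\gamma_{\alpha\beta}=\bar{\lambda}\gamma'_{\alpha\beta}$ via the auxiliary homomorphism $\bar{\lambda}:\Gamma'\to\Gamma$ introduced just before the proposition, and then argues that each factor collapses to $e_\Gamma$. Your argument instead bypasses $\bar{\lambda}$ entirely: you feed $\gamma_{\alpha\beta}=\delta_{\alpha\beta}\gamma'_{\alpha\beta}$ into the cocycle condition for the $\gamma$'s, invoke the centrality of $\delta_{\beta\rho}$ established in Proposition~3.1, and then cancel using the cocycle condition for the $\gamma'$'s. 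This is cleaner and makes the logical dependence on Proposition~3.1 explicit, whereas the paper's route leans on the extra structure $\bar{\lambda}$. Both arrive at the same conclusion; yours is the more elementary and self-contained of the two.
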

\begin{proof}
 Let $\partial$ denote the coboundary operator, such that for $x\in U_\alpha\cap U_\beta\cap U_\kappa$ we have
\begin{equation} 
\partial \delta(\alpha,\beta,\kappa)=\delta_{\beta\kappa}\delta_{\alpha\kappa}^{-1}\delta_{\alpha\beta}.
\end{equation} Taking into account Eq. (\ref{u1}) and the definition of $\bar{\lambda}$, it is readily verified that 
\begin{equation}
\partial \delta(\alpha,\beta,\kappa)=(\bar{\lambda}\gamma'_{\beta\kappa}\gamma_{\beta\kappa}'^{-1})\gamma_{\alpha\kappa}'\gamma_{\alpha\kappa}^{-1}(\bar{\lambda}\gamma'_{\alpha\beta}\gamma_{\alpha\beta}^{-1}). 
\end{equation} The terms in parenthesis in the equation above are trivially mapped into $e_{\Gamma}$, whilst $\gamma_{\alpha\kappa}'\gamma_{\alpha\kappa}^{-1}=(\gamma_{\alpha\kappa}\gamma_{\alpha\kappa}'^{-1})^{-1}=(\bar{\lambda}\gamma_{\alpha\kappa}'\gamma_{\alpha\kappa}'^{-1})^{-1}=e_{\Gamma}$, that is $\partial \delta(\alpha,\beta,\kappa)=e_{\Gamma}$. 
\end{proof}
 
This last proposition shows that $\delta$ belongs to the first \v{C}ech cohomology group. The element $\delta(\tilde{P},\tilde{P}')$ is usually called the difference class for the bundles $\tilde{P}$ and $\tilde{P}'$. 

\begin{prop}
The difference class is such that $\delta(\tilde{P},\tilde{P}')\cdot \delta(\tilde{P}',\tilde{P}'')=\delta(\tilde{P},\tilde{P}'')$. 
\end{prop}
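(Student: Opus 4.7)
The plan is to verify the identity at the level of cochain representatives and then pass to cohomology classes, since by Propositions~1 and 2 every relevant object is a $\mathbb{Z}_{2}$-valued \v{C}ech cocycle and $\mathbb{Z}_{2}\subset Z(\Gamma)$ is abelian, so the product of classes is unambiguously represented by the pointwise product of representatives.

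Concretely, let $\gamma_{\alpha\beta}$, $\gamma'_{\alpha\beta}$ and $\gamma''_{\alpha\beta}$ denote transition functions for $\tilde{P}$, $\tilde{P}'$ and $\tilde{P}''$, respectively, all lifting a common set of transition functions in $P$ through the homomorphism $f$. By definition (\ref{u1}) applied to each pair of $\Gamma$-structures, the three difference cocycles are represented on $U_\alpha\cap U_\beta$ by
\begin{equation}
\delta(\tilde{P},\tilde{P}')_{\alpha\beta}=\gamma_{\alpha\beta}\,\gamma'^{\,-1}_{\alpha\beta},\qquad \delta(\tilde{P}',\tilde{P}'')_{\alpha\beta}=\gamma'_{\alpha\beta}\,\gamma''^{\,-1}_{\alpha\beta},\qquad \delta(\tilde{P},\tilde{P}'')_{\alpha\beta}=\gamma_{\alpha\beta}\,\gamma''^{\,-1}_{\alpha\beta}.
\end{equation}
Taking the product of the first two cochains and using that $\gamma'^{\,-1}_{\alpha\beta}\gamma'_{\alpha\beta}=e_\Gamma$ then gives $\delta(\tilde{P},\tilde{P}')_{\alpha\beta}\,\delta(\tilde{P}',\tilde{P}'')_{\alpha\beta}=\gamma_{\alpha\beta}\,\gamma''^{\,-1}_{\alpha\beta}=\delta(\tilde{P},\tilde{P}'')_{\alpha\beta}$ on every double overlap.

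The only subtlety is checking that the computation is meaningful as an identity of cohomology classes rather than of particular cocycle representatives. This is where the fact (established in Proposition~1) that each $\delta_{\alpha\beta}$ takes values in $\ker(f)=\mathbb{Z}_2$, and that $\mathbb{Z}_2$ is central in $\Gamma$, is essential: the pointwise product of two $\mathbb{Z}_2$-valued 1-cocycles is again a $\mathbb{Z}_2$-valued 1-cocycle, and two cohomologous representatives of one factor give cohomologous products. I would spell this out briefly, noting that the differential $\partial$ acts multiplicatively on cochains with abelian coefficients, so $\partial(\delta\cdot\delta')=(\partial\delta)(\partial\delta')=e_{\Gamma}$ follows from Proposition~2 applied to each factor.

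The main obstacle, if any, is purely bookkeeping: one must be careful that the choice of lifts $\gamma_{\alpha\beta}$, $\gamma'_{\alpha\beta}$, $\gamma''_{\alpha\beta}$ all sit over the same transition functions in $P$, so that all three products $\gamma\gamma'^{-1}$, $\gamma'\gamma''^{-1}$, $\gamma\gamma''^{-1}$ indeed land in $\ker(f)$ and the telescoping cancellation is well defined. Once this common reference $P$ is fixed, the result is essentially the statement that the map $(\tilde{P},\tilde{P}')\mapsto \delta(\tilde{P},\tilde{P}')$ is a cocycle on the set of $\Gamma$-structures valued in $H^{1}(\mathcal{M},\mathbb{Z}_2)$, i.e., a coboundary-type relation analogous to $\delta_{ij}\delta_{jk}=\delta_{ik}$, and everything reduces to the elementary cancellation displayed above.
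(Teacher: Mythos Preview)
Your proof is correct and follows the same route as the paper: the paper's entire argument is the one-line telescoping cancellation $\gamma_{\alpha\beta}\gamma'^{-1}_{\alpha\beta}\cdot\gamma'_{\alpha\beta}\gamma''^{-1}_{\alpha\beta}=\gamma_{\alpha\beta}\gamma''^{-1}_{\alpha\beta}$, which is precisely your central computation. Your additional remarks on well-definedness at the level of cohomology classes (via abelianness and centrality of $\mathbb{Z}_2$) go beyond what the paper records, but they are sound and do not change the approach.
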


\begin{proof}
 It is immediate that $\delta(\tilde{P},\tilde{P}')\cdot\delta(\tilde{P}',\tilde{P}'')=\gamma_{\alpha\beta}\gamma_{\alpha\beta}'^{-1}\cdot \gamma_{\alpha\beta}'\gamma_{\alpha\beta}''^{-1}=\gamma_{\alpha\beta}\gamma_{\alpha\beta}''^{-1}=\delta(\tilde{P},\tilde{P}'')$. 
\end{proof}

Now, consider the $\Gamma$-structures $\tilde{P}, \tilde{P}'$ and $\tilde{P}''$ in $P$. The equivalence/non-equivalence between them may be framed as follows: assuming that $\phi:\tilde{P}'\rightarrow\tilde{P}''$ is a $\Gamma$-structures isomorphism and $s'_\alpha:U_\alpha\rightarrow\tilde{P}'$ a local section system, the definition $s''_\alpha=\phi s'_\alpha$ leads to a local section system counterpart to $\tilde{P}''$ (see the diagram below). 

\begin{equation}
\xymatrix{
	\mathcal{M}\supset U_\alpha \ar[rdd]_{s''_\alpha} \ar[rr]^{s'_\alpha} & & \tilde{P}' \ar[ldd]^{\phi}\nonumber \\
	& & \\
	& \tilde{P}''
}
\end{equation} Therefore $s'_\beta=s'_\alpha\gamma'_{\alpha\beta}$ and $s''_\beta=s''_\alpha\gamma''_{\alpha\beta}$. The application of $\phi$ in $s'_\beta$ leads to $s''_\beta=s''_\alpha\gamma'_{\alpha\beta}$, hence $\gamma'_{\alpha\beta}=\gamma''_{\alpha\beta}$. Thus, the difference class $\delta(\tilde{P}',\tilde{P}'')=\gamma'_{\alpha\beta}\gamma_{\alpha\beta}''^{-1}=e$ vanishes. In light of Proposition III.3, this means that $\delta(\tilde{P},\tilde{P}')=\delta(\tilde{P},\tilde{P}'')$. In this way, the equivalence between $\tilde{P}'$ and $\tilde{P}''$ (asserted by the isomorphism) imply $\delta(\tilde{P}',\tilde{P}'')=e$ and $\tilde{P}'=\tilde{P}''$. The reciprocal statement demands a more careful analysis in our case. 

In order to envisage the previous section setup subtleties, let us start assuming now $\delta(\tilde{P}',\tilde{P}'')=e$. This means the very existence of a 1-cochain, $\chi$ $(x\in U_\alpha\cap U_\beta)$ such that\footnote{In fact, notice that $\partial \delta(\alpha,\beta,\kappa)=[\chi(\beta)\chi^{-1}(\kappa)][\chi(\alpha)\chi^{-1}(\kappa)]^{-1}[\chi(\alpha)\chi^{-1}(\beta)]=e$.} $\delta(\alpha,\beta)=\chi(\alpha)\chi^{-1}(\beta)$. Hence, $\gamma'_{\alpha\beta}\gamma_{\alpha\beta}''^{-1}=\chi(\alpha)\chi^{-1}(\beta)$ and, thus $\chi^{-1}(\alpha)\gamma'_{\alpha\beta}=\chi^{-1}(\beta)\gamma''_{\alpha\beta}$. Remembering that $\gamma: U\rightarrow \Gamma$ and $\chi:U\subset M\rightarrow \mathbb{Z}_2\in Z(\Gamma)$, it is possible to write the gamma lifts relation as 
\begin{eqnarray}\label{e0}
\gamma''_{\alpha\beta}=\chi^{-1}(\alpha)\gamma'_{\alpha\beta}\chi(\beta). 
\end{eqnarray} Consider now usual projections $\tilde{\pi}:T\mathcal{M}\rightarrow \mathcal{M}$ and define $W_\alpha=\tilde{\pi}^{-1}(U_\alpha)$. Taking $z\in W_\alpha$, consider also $\gamma_\alpha(z)\in \Gamma$ as the element satisfying\footnote{The element $\gamma_\alpha$ is unique. In fact, if there is $\bar{\bar{\gamma}}_\alpha$ such that  $z=s'_\alpha(\tilde{\pi}'(z))\bar{\bar{\gamma}}_\alpha(z)$, then $s'_\alpha(\tilde{\pi}'(z))\bar{\bar{\gamma}}_\alpha(z)=s'_\alpha(\tilde{\pi}'(z))\gamma_\alpha(z)$, from which the uniqueness is readily obtained.}  $z=s'_\alpha(\tilde{\pi}'(z))\gamma_\alpha(z)$. We are about to define a map between $\Gamma$-structures. However, in light of the last section, this construction must consider the proposed deformation at the fiber bundle level since both maps ($\phi$ and $\bar{\rho}(\xi)$) relate spinor bundles starting from open sets in the base manifold. Therefore, define $\tilde{W}_\alpha:=\mathbb{R}\times W_\alpha$ and $\mathcal{F}$ as a suitable finite functional\footnote{No stringent requirements shall be imputed to $\mathcal{F}.$ It could even be taken as the $\varphi$ function itself.} of $\varphi_\alpha(\tilde{\pi}'(z))$ (the last section deformation function). The mapping $\phi_\alpha:\tilde{W}_\alpha\rightarrow \tilde{P}''$ is defined by
\begin{equation}\label{e1}
\phi_\alpha :=\mathcal{F}[\varphi_\alpha(\tilde{\pi}'(z))]s''_\alpha(\tilde{\pi}'(z))\cdot \chi(\alpha)\gamma_\alpha(z).
\end{equation} Analogously $\phi_\beta =\mathcal{F}[\varphi_\beta(\tilde{\pi}'(z))]s''_\beta(\tilde{\pi}'(z))\cdot \chi(\beta)\gamma_\beta(z)$. Nevertheless, $s''_\beta=s''_\alpha\gamma''_{\alpha\beta}$ and, in view of Eq. (\ref{e0}) $s''_\beta=s''_\alpha\chi^{-1}(\alpha)\gamma'_{\alpha\beta}\chi(\beta)$. Therefore 
\begin{equation}
\phi_\beta(z)=\mathcal{F}[\varphi_\beta(\tilde{\pi}'(z))]s''_\alpha(\tilde{\pi}'(z))\chi^{-1}(\alpha)\gamma'_{\alpha\beta}\chi(\beta)\cdot\chi(\beta)\gamma_\beta(z),\label{e2} 
\end{equation} or, recalling that $\chi\in\mathbb{Z}_2$,  
\begin{equation}
\phi_\beta(z)=\mathcal{F}[\varphi_\beta(\tilde{\pi}'(z))]s''_\alpha(\tilde{\pi}'(z))\chi(\alpha)\gamma'_{\alpha\beta}\gamma_\beta(z).\label{e3} 
\end{equation}

Notice that for $z\in W_\alpha\cap W_\beta$ we have, naturally,
\begin{eqnarray}
z=s'_\beta(\tilde{\pi}'(z))\gamma_\beta(z),\label{e4}\\
z=s'_\alpha(\tilde{\pi}'(z))\gamma_\alpha(z).\label{e5}
\end{eqnarray} From (\ref{e4}) it is straightforward to see that $\gamma'_{\alpha\beta}[s'_\beta(\tilde{\pi}'(z))]^{-1}z=\gamma'_{\alpha\beta}\gamma_\beta(z)$ whose left-hand side amounts out to be 
\begin{eqnarray}\label{e6}
[s'_\beta(\tilde{\pi}'(z))(\gamma'_{\alpha\beta})^{-1}]^{-1}z=[s'_\beta(\tilde{\pi}'(z))\gamma'_{\beta\alpha}]^{-1}z=[s'_\alpha(\tilde{\pi}'(z))]^{-1}z.
\end{eqnarray} Now, Eq. (\ref{e5}) can be used so that Eq. (\ref{e6}) is recognized as being $\gamma_\alpha (z)$ and, hence, $\gamma'_{\alpha\beta}\gamma_\beta(z)=\gamma_\alpha (z)$. Back to Eq. (\ref{e3}) (taking into account Eq. (\ref{e1})) we have 
\begin{equation}
\phi_\beta(z)=\mathcal{F}[\varphi_\beta(\tilde{\pi}'(z))]\mathcal{F}^{-1}[\varphi_\alpha(\tilde{\pi}'(z))]\phi_\alpha(z),\label{e7}
\end{equation} provided $\mathcal{F}$ is non null. This is a relevant result. Firstly, consider regions where $\varphi$ is constant, i. e., $\Sigma_1$ or $\Sigma_3$. Then, Eq. (\ref{e7}) immediately shows that $\phi_\alpha=\phi_\beta$ and, thus, $\phi$ defines a global mapping $\phi:\tilde{P}'\rightarrow \tilde{P}''$. As well known, at the local trivialization level it implies the existence of $q$ such that $qs'_\alpha=qs'_\beta\gamma'_{\beta\alpha}$ or, from Eq. (\ref{e0}), $\chi(\alpha)qs'_\alpha=\chi(\beta)qs_\beta'\gamma''_{\beta\alpha}$, leading to 
\begin{equation}
s''_\alpha=\chi(\alpha)qs'_\alpha.\label{r1}
\end{equation} From Eq. (\ref{e5}) one has $z\cdot \gamma=s'_\alpha(\tilde{\pi}'(z\cdot\gamma))\gamma_\alpha(z\cdot\gamma)$. On the other hand, $z\cdot\gamma=s'_\alpha(\tilde{\pi}'(z))\gamma_\alpha(z)\cdot \gamma$. Therefore, $s'_\alpha(\tilde{\pi}'(z\cdot\gamma))\gamma_\alpha(z\cdot\gamma)=s'_\alpha(\tilde{\pi}'(z))\gamma_\alpha(z)\cdot\gamma$ and entering $\chi(\alpha)q$ from the left we get, with the aid of Eq. (\ref{r1})
\begin{equation}
s''_\alpha(\tilde{\pi}'(z\cdot\gamma))\gamma(z\cdot\gamma)=s''_\alpha(\tilde{\pi}'(z))\gamma_\alpha(z)\cdot\gamma. \label{r2}
\end{equation} Remembering, once again, that $\chi\in\mathbb{Z}_2\subset Z(\Gamma)$ and bearing in mind Eq. (\ref{e1}) for $\varphi$ constant, we finally arrive at $\phi(z\cdot\gamma)=\phi(z)\cdot\gamma$ and the isomorphism between the $\Gamma$-structures is complete. Hence, for the aforementioned regions, we have recovered the standard result, namely, $\tilde{P}'\simeq\tilde{P}''$ if, and only if, $\delta(\tilde{P}',\tilde{P}'')=e$. 

The novelty of the presented construction rests upon the region $^\varphi\mathbb{R}^{1,3}$. Here, even being a region of trivial topology (in the sense that $\pi_1(^\varphi\mathbb{R}^{1,3})=0$), Eq. (\ref{e7}) shows that the $\varphi$-deformation prevents the existence of a global mapping between $\tilde{P}'$ and $\tilde{P}''$. In other words, exotic spinors are indeed {\it nonequivalent} to standard ones, having dynamics corrected by a term taking the deformation into account, regardless of the mentioned constancy of $\theta(x)$ alluded to in the last section. Of course, the mapping may well exist locally, but not globally in $^\varphi\mathbb{R}^{1,3}$. So, the situation in $\mathcal{M}$ is as follows: since $\pi_1(\mathcal{M})\neq 0$, the first \v{C}ech cohomology group $\check{H}^1(\mathcal{M},\mathbb{Z}_2)\neq 0$ and $\mathcal{M}$ will have as many nonequivalent spinor structures as elements in $\check{H}^1$. In the light of the previous section setup, exotic spinors are distinguishable from usual ones in $\Sigma_1$ and, due to the $\varphi$-deformation, also in $^\varphi\mathbb{R}^{1,3}$. In the next section, we shall discuss criteria for possible topology coexisting, so to speak, in $\mathcal{M}$.
 
\section{physical consequences}

We split this section into three parts. The first one aims to discuss the current conservation further in light of the proposed separation between spacetime regions. The second one, accounting for a quasinormal-like behavior and an application of the Gordon decomposition, and the third one, where we extend the results presented in Ref. \cite{np}, where an approach to geometrical consequences of nontrivial topology was settled by investigating a nonlinear sigma model. 

\subsection{Junction condition}

The separation in spacetime regions we made before forces us to a more precise notion of current conservation. We shall deal exclusively with flat spacetime. Nevertheless, for the sake of generality, we develop a more comprehensive analysis in this subsection. Based on the matching conditions developed in Ref. \cite{mcf}, consider an orientable spacelike hypersurface $\Delta\subset \mathcal{M}$. We shall also open the possibility of a timelike hypersurface. Consider a congruence of geodesics piercing $\Delta$ orthogonally and denote the proper distance coordinates along the congruence by $r\in\mathbb{R}$, parameterized such that $r$ vanishes at $\Delta$. In the case of a timelike $\Delta$, $r$ stands for the proper time along the congruence. This construction allows to set $r\in \mathbb{R}^*_-$ denoting one side of $\Delta$, while $r\in \mathbb{R}^*_+$ denotes its side. As the hypersurface $\Delta$ is assumed orientable, we are also allowed to define $\{n_\mu\}$ as the set of vectors normal to $\Delta$ at every point, in view of what infinitesimal displacements away from $\Delta$ are settled by $dx^\mu=n^\mu dr$. There are two immediate consequences: the first one is that $dx^\mu dx_\mu=n^\mu n_\mu (dr)^2$ and therefore $n^\mu n_\mu>0$ for spacelike $\Delta$ ($n^\mu n_\mu<0$ for timelike $\Delta$); the second point is that $n_\mu dx^\mu=n_\mu n^\mu dr$ and, therefore, $n_\mu=\pm\partial_\mu r$ by properly normalizing $n_\mu$. Recall the standard definition of the Heaviside distribution $\Theta(r)$: $\Theta(r)=1$, if $r>0$, $\Theta(r)=0$, if $r<0$, and indeterminate if $r=0$. 

As a last ingredient, define a continuous and differentiable coordinate system, $\{x^\mu\}$, traversing both sides of $\Delta$. The current decomposition (thought as a distribution-valued vector) on either side of $\Delta$ reads
\begin{equation} 
j^\mu=\Theta(r)j^\mu_+ +\Theta(-r)j^\mu_-,
\end{equation} where $j^\mu_+$ ($j^\mu_-$) denotes the current on the $r>0$ ($r<0$) side of $\Delta$. Taking the derivative leads to 
\begin{equation} \label{sing}
\partial_\mu j^\mu=\Theta(r)\partial_\mu j^\mu_+ +\Theta(-r)\partial_\mu j^\mu_- \pm\delta(r)(j^\mu_+ -j^\mu_-)n_\mu,
\end{equation} where use was made of the Dirac delta distribution resulting from the Heaviside derivation and, in the last term, the upper (down) sign stands for spacelike (timelike) $\Delta$. The current conservation follows straightforwardly from the discussion presented at the final of Section II, provided we made $\Delta$ coincide with the interface between $\Sigma$ regions, use the corresponding dynamical equation for respective $\Delta$ side, and require the vanishing of the singular term in (\ref{sing}). The current is conserved everywhere (with conservation law existing as a distribution) if, and only if, $j^\mu$ is continuous across the hypersurface $\Delta$. The behavior already required for the deformation function fills this constraint. 

Finally, denote $\{z^m\}$ a coordinate system within $\Delta$ and parameterize it as $x^\mu=x^\mu(z^m)$. Let $C$ be a set of curves belonging to $\Delta$. Hence, vectors tangent to an element of $C$ are given by $e^\mu_{\,m}=\partial x^\mu/\partial z^m$ and, naturally, $e^\mu_{\,m} n_\mu=0$. Besides, continuity of $\{x^\mu\}$ ensures $(e_+)^{\mu}_{\;m}=(e_-)^{\mu}_{\;m}$. Therefore, contraction of the conservation condition $j^\mu_+=j^\mu_-$ with $e^m_{\;\mu}$ gives $j^m_+=j^m_-$, and the induced current must be the same on both sides of $\Delta$. This is the junction condition in our case.  

\subsection{Kinematic and dynamic effects}

From now on, we shall deal with the physical consequences of exotic spinors in the $^\varphi\mathbb{R}^{1,3}$ region (and a residual effect in $\Sigma_3$). As discussed in the preceding sections, calling $\varphi^{-1}\partial_\mu\varphi\equiv k_\mu$, the equation of motion reads $i\gamma^\mu(\partial_\mu+k_\mu)\tilde{\psi}-m\tilde{\psi}=0$. The $k_\mu$ vector results from the deformation procedure, which, by its turn, is possible due to the $\tilde{P}\rightarrow P$ mapping. It then encodes the nontrivial topology echo in $^\varphi\mathbb{R}^{1,3}$. We set as appropriate conditions for this deformation an approximate constant vector $k_\mu$ of small magnitude so that $\partial k$ and $k^2$ shall be disregarded from the physical analysis. These conditions are the practical implementation of the mathematical slowly varying function discussion found in the appendix.

The presence of $i\gamma^\mu k_\mu$ in the dynamical equation may be regarded as an optical potential-like term. In order to see this, let us investigate the dispersion relation evaluating $[i\gamma^\nu(\partial_\nu+k_\nu)+m][i\gamma^\mu(\partial_\mu+k_\mu)-m]\tilde{\psi}=0$. A bit of straightforward algebra leads to 
\begin{eqnarray}
\Box\tilde{\psi}+m^2\tilde{\psi}+2k^\mu\partial_\mu\tilde{\psi}+\gamma^\mu\gamma^\nu\partial_\mu k_\nu\tilde{\psi}+k^2\tilde{\psi}=0,
\end{eqnarray} which, after the imposition of the above simplifications, reads simply 
\begin{eqnarray}
\Box\tilde{\psi}+m^2\tilde{\psi}+2k^\mu\partial_\mu\tilde{\psi}=0.\label{aaa1}
\end{eqnarray} The first derivative term is the optical potential analog and shall lead to quasinormal-like modes to the exotic fermion. Quasinormal modes are indeed valuable for treating the dissipative feature of several systems \cite{CE}, from black holes \cite{quasi} to condensed matter physics \cite{CM}. To evince this behavior in our context, consider a region $\Omega\subset \,^\varphi\mathbb{R}^{1,3}$ whose boundary will be denoted by $\partial\Omega$. The spinorial Fourier transform of (\ref{aaa1}) gives 
\begin{eqnarray}   
\int_\Omega d^4x (-p^2+m^2-2ik^\alpha p_\alpha)\tilde{\psi} \exp{(ipx)}+\bigg\{[\partial^\alpha\tilde{\psi}+(2k^\alpha-ip^\alpha)\tilde{\psi}]\exp{(ipx)}\bigg\}_{\partial\Omega}=0.
\end{eqnarray} The imposition of Neumann-Dirichlet boundary conditions allows one to write 
\begin{equation}
E^2+2ik^0E-{\bf p}\cdot({\bf p}+2i{\bf k})-m^2=0,
\end{equation} giving, in the approximation scope,
\begin{eqnarray}\label{aaa2}
E\simeq \pm\sqrt{{\bf p}^2+m^2}+i\Bigg\{-k^0\pm\frac{{\bf p}\cdot{\bf k}}{\sqrt{{\bf p}^2+m^2}}\Bigg\},
\end{eqnarray} from which the quasinormal-like behavior is immediate. Accommodating what is expected for an observer in rest with respect to the particle, one should, quite prudently, require $k^0=0$, that is\footnote{The vanishing of $k^0$ is not mandatory (at least not for this reason), since the rest observer with respect to the particle is not inertial. Nevertheless, even for $k^0=0$, as as it is clear, the quasinormal-like behavior is in order. We shall assume $k^0=0$ for simplicity.}  $\varphi=\varphi(\bf{x})$. The first derivative term is a consequence of the deformation taken, by its turn, as an echo of nontrivial topology. As we can see from Eq. (\ref{aaa2}), the dispersion relation is impacted so that usual pure oscillating terms are replaced by $\exp{[i(Re(p_0)+iIm(p_0))t]}$ and an overall friction-like factor $\exp{[\mp {\bf p}\cdot {\bf k}/\sqrt{{\bf p}^2+m^2}]}$ appears. The term `friction' was somewhat loosely used, but it can be upgraded with a more concrete sense as follows: the constant vector ${\bf k}=\varphi^{-1}{\bf\nabla}\varphi$ encodes the deformation function gradient. For states approaching the $\mathcal{V}\subseteq\Omega$ region, it seems natural to think of an outgoing gradient as a barrier type. In contrast, an ongoing gradient should drag field modes to $\mathcal{V}$ (with a similar, but opposite, interpretation for states leaving the region). Therefore, the field modes momentum is anti-parallel to ${\bf k}$, i.e., ${\bf p}\cdot{\bf k}<0$, rendering a truly friction term damping approaching modes, and reinforcing leaving modes. Given the approximations, one can expect that the damping and reinforcing magnitudes are similar but do not compensate for each other quite exactly. 

We also remark that from the dispersion equation, it is also possible to find out the group velocity $v_j=\partial E/\partial p_j$ (in components) given by 
\begin{eqnarray}\label{aaa3}
v_j=\pm \frac{p_j}{\sqrt{{\bf p}^2+m^2}}\mp \frac{i}{\sqrt{{\bf p}^2+m^2}}\Bigg[\frac{({\bf p}\cdot{\bf k})p_j}{{\bf p}^2+m^2}-k_j\Bigg],
\end{eqnarray} corroborating the quasinormal-like behavior. Both expressions, (\ref{aaa2}) and (\ref{aaa3}), reduce to the standard ones in the limit $k_\mu\rightarrow 0$, as expected. Besides, always in the discussed approximation scope, $\eta^{jk}v_j v^* _k=|{\bf v}|^2<1$, indicating the conservation of causal structure in $\mathcal{V}$.      

So far, we have dealt with the impacts of the Lorentz violation on the wave function. It is instructive to extend the analysis by looking at Gordon's current decomposition. In doing so, we shall work at $\Sigma_3$ and face a consequence of the residual constant $\varphi$ aforementioned in Section II. We start performing the calculation in general, considering $k_\alpha$ and particularizing to a region of $\Sigma_3$ afterward. From the equation of motion, we have 
\begin{equation}
\bar{\tilde{\psi}}\gamma^\mu(m\tilde{\psi})=i\bar{\tilde{\psi}}\gamma^\mu\gamma^\alpha\partial_\alpha\tilde{\psi}+i\bar{\tilde{\psi}}\gamma^\mu\gamma^\alpha k_\alpha\tilde{\psi}, \label{ab1}
\end{equation} while     
\begin{equation}
(\bar{\tilde{\psi}}m)\gamma^\mu\tilde{\psi}=-i\partial_\alpha\bar{\tilde{\psi}}\gamma^\alpha\gamma^\mu\tilde{\psi}-ik_\alpha\bar{\tilde{\psi}}\gamma^\alpha\gamma^\mu\tilde{\psi}.\label{ab2}
\end{equation} Adding the two equations above, and using the standard gamma matrices relation $\gamma^\mu\gamma^\nu=\eta^{\mu\nu}-i\sigma^{\mu\nu}$, where $\sigma^{\mu\nu}=\frac{i}{2}[\gamma^\mu,\gamma^\nu]$, it is possible to achieve a factored expression for $j^\mu=\varphi^2\bar{\tilde{\psi}}\gamma^\mu\tilde{\psi}$ given by  
\begin{eqnarray}
j^\mu=\frac{i\varphi^2}{2m}(\bar{\tilde{\psi}}\partial^\mu\tilde{\psi}-\partial^\mu\bar{\tilde{\psi}}\tilde{\psi})+\frac{\varphi^2}{2m}\bigg\{\partial_\alpha(\bar{\tilde{\psi}}\sigma^{\mu\nu}\tilde{\psi})+2k_\alpha\bar{\tilde{\psi}}\sigma^{\mu\alpha}\tilde{\psi}\bigg\}\label{ab3}
\end{eqnarray} and recalling that $k_\alpha=\varphi^{-1}\partial_\alpha\varphi$, Eq. (\ref{ab3}) can be recast  
\begin{eqnarray}
j^\mu=\frac{i\varphi^2}{2m}(\bar{\tilde{\psi}}\partial^\mu\tilde{\psi}-\partial^\mu\bar{\tilde{\psi}}\tilde{\psi})+\frac{1}{2m}\partial_\alpha\big(\varphi^2\bar{\tilde{\psi}}\sigma^{\mu\alpha}\tilde{\psi})\big)\equiv \varphi^2 j^\mu_1+j^\mu_2.\label{ab4}
\end{eqnarray} This last expression resembles the usual Gordon decomposition, except by $\varphi$ terms. The first part of Eq. (\ref{ab4}) is less relevant to our purposes. The exotic current coupling to the electromagnetic field is achieved by the Hamiltonian density $H_{int}=-j^\mu A_\mu$ from which we select $-j^\mu_2A_\mu$ for further analyses. To properly consider this coupling, we shall take some care: the first point is that such a coupling is usually taken concerning the (electric) charge $e$. Here we are assuming that the $A_\mu$ field is coupling to the charge coming from the conservation of $j^\mu=\varphi^2\bar{\tilde{\psi}}\gamma^\mu\tilde{\psi}$.  If one takes this coupling to {\it bona fide}, then it is clear that its consequences shall also remain in the formal limit $e\rightarrow 0$. Also, we now concentrate the analysis to $\Omega_3\subset\Sigma_3$. The reason for that is the possibility of thinking of the standard quatization procedure and nonrelativistic limit on a usual basis, the only peculiarity coming from a constant value for $\varphi\equiv \bar{\varphi}$.    

Given the above discussion, we have 
\begin{eqnarray}
H_{int}\supset -\frac{1}{2m}\partial_\alpha(\bar{\varphi}^2\bar{\tilde{\psi}}\sigma^{\mu\alpha}\tilde{\psi}A_\mu)+\frac{1}{2m}\bigg(\frac{1}{2}\bar{\varphi}^2 F_{\alpha\mu}\bar{\tilde{\psi}}\sigma^{\mu\alpha}\tilde{\psi}\bigg),\label{ab5}
\end{eqnarray} where $F_{\mu\nu}$ is the electromagnetic field strength. The first term will be disregarded upon volume integration in $\Omega_3$. In contrast, for the second term, appropriate textbook nonrelativistic limit \cite{PS} leads to paradigmatic magnetic moment factor, in our case given by $\mu=\frac{-\bar{\varphi}^2}{2m}$. This is relevant since neutral particles are known to have sizable magnetic moments, and no charge is presented in our formulation. For instance, this fact can be used to explore the neutron magnetic moment with an effective coupling automatically raised using exotic deformed neutral fermions. In this case, we can make contact with experimental values, remembering that $\mu_N=|e|/2m_p$, where $m_p$ is the proton mass and hence $\mu_N$ stands for the nuclear magneton. Thus $\mu/\mu_N=-\bar{\varphi}^2m_p/(|e|m)$. For the neutron, the rate $\mu/\mu_N$ is about\footnote{The values are here presented to furnish an estimate, without much care about significance, errors, and so on.} $-1.913$ \cite{cod}. Therefore, setting for a moment the exotic spinor mass as $m=\beta m_p$ (where $\beta$ is a real constant), we find $\bar{\varphi}^2/\beta=1.913|e|$. From the fine structure constant (in natural units so chosen that the vacuum electric permittivity equals $1$), we have $e^2/4\pi=1/137$, enabling one to take $|e|\simeq 0.303$. This leads to $\bar{\varphi}^2/\beta\simeq 0.5796$, and by requiring (for self-consistency) that the exotic fermion mass equals the neutron mass ($m=m_n$), we arrive at $\bar{\varphi}^2\simeq 0.5796\frac{m_n}{m_p}$. This fixes the residual value for $\bar{\varphi}$ at about $0.7618$, a value within the allowed range for the deformation.

Some remarks are necessary for this last application. Firstly, as it is clear, we are completely ignoring any particle internal structure and only considering an effective fermionic field \cite{ictp}. We are also not claiming that exotic spinors should describe neutral fermions, but instead saying that, in doing so, a complex result can be made straightforward in this effective approach. As we made clear, we particularized the final part of this analysis within $\Sigma_3$ to use the usual equation of motion and make contact with the standard nonrelativistic limit. Had we insisted in work in $^\varphi\mathbb{R}^{1,3}$, it is not clear which contribution should come from $\varphi^{-1}\partial\varphi$. It seems, however, that due to the deformation function's slowly varying behavior, one could expect a running (or, better said, a walking) coupling, apparently leading to a varying magnetic moment. It is a genuine question whether this controlled Lorentz-violation approach can reduce the theoretical/experimental tension of the muon $g-2$ factor. While it certainly requires a more careful analysis, we are inclined to say that an investigation that parallels what we have done here would suffer from similar problems of other Lorentz-violation models \cite{jjh}.    
 
It is important to say a word about the $\bar{\varphi}$ residual term. Firstly, as it can be seen, denoting by $d\bf{l}$ an infinitesimal displacement along a curve $\alpha\subset \Sigma_2$, it is fairly simple to see that ${\bf k}\cdot d{\bf l}=\varphi^{-1}d\varphi$, from which we have $\varphi({\bf x})=\varphi_0\exp{({\bf k}\cdot\int^{{\bf x}}d{\bf l})}$, where $\varphi_0$ is a constant and the integral is taken over a smooth curve $\alpha\subset \Sigma_2$. In this vein, the diameter of $\Sigma_2$ is related to the saturation value of $\varphi\equiv \bar{\varphi}$. On a different basis, as pointed out in Ref. \cite{np}, the impact coming from the terms correcting (and deforming) the spinor derivative may be interpreted in a less literal basis than the space-time separation proposed. In natural units, these terms scale with energy. Grand energy scales (and, equivalently, short distances scrutinized) bring more significant effects from the deformation term. Therefore, one could interpret the $\Sigma$ regions as down-deep length scale regimes. In this regard, the residual term could be framed as a low-energy vestige of high-energy Lorentz breaking, possibly due to the deformation taken upon the nontrivial topology. Still, there is no saturation mechanism for the deformation, but only a fit with a physical effect. So, any serious tentativeness in following this interpretation should consider that we may be pushing our results too forward by studying higher-energy field theoretical phenomena via a single-particle wave function.   

It can be seen that while the standard interpretation of the analyzed setup shows a fixed topology, the field propagating modes, when going through $\Sigma_1$ to $\Sigma_2$ regions (or vice-versa), experience a topology change. It is unclear whether all the issues of dynamical topology are necessarily imputed to our approach. However, it is possible to depict a picture in the context studied. As analyzed in Ref. \cite{Anderson:1986ww} and followed by Ref. \cite{trousers_revisited}, in such a context, particle creation related to the nontrivial effect in the field modes due to the spacetime topology change is expected. This effect is described by two different vacua related via a Bogoliubov transformation. However, infinitely many particles may be created depending on the topology change.

Let us analyze the case of a massless scalar field whose asymptotic modes dynamic is given in $\mathbb{R}^4$ for $t<0$ and $\mathbb{R}\times(\mathbb{R}^2\times S^1)$ for $t>0$. The momentum field modes associated with $S^1$ are expected to assume a discrete distribution behavior as (multiply integers of) $R^{-1}$, where $R$ is a typical scale associated with the $S^1$ radius. This partial discretization also impacts the frequency, and divergent behavior occurs at high energy scales \cite{pd} (see also \cite{dowker} for a complementary approach). Such a behavior is also expected (though less severe) for Dirac fields and also for the opposite case, i.e., asymptotic topology going as $\mathbb{R}\times(\mathbb{R}^2\times S^1)$ for $t<0$ and $\mathbb{R}^4$ for $t>0$. The approaches to deal with these divergences can be brought into three main classes: 1) the usual introduction of cutoffs to regulate the divergence \cite{zeld,fulling}, turning the theory adequate up to an effective energy scale (in consonance with which we framed our discussion so far); 2) the use of twisted boundary conditions \cite{FV}, in which appropriate boundary conditions changes the spectrum in a controlled fashion (in this case, it is necessary additional care to see whether the effect of the suitable boundary conditions is preserved when putting the fields in a box for quantization - see discussion below); 3) considerations about smooth topology changes employing adiabatic regularization technics \cite{BD}, in which the radius $R_{\epsilon}(t)$ is a suitable smooth function of the time coordinate multiplied by the $\epsilon$ (with dimension $length^{-1}$) parameter. Considering $\epsilon\rightarrow 0$ as an appropriate limit ensures a slow enough compactification (or decompactification), controlling the divergences. While lacking a mechanism, this last possibility is particularly appealing in our discussion since it mimics the deformation function of slow behavior at the topology level.

It is a good point to emphasize some technical aspects. Firstly, not all topologies can be considered for the sigma regions. A conservative proposal goes as follows: consider $\Sigma^C_i$ $(i=1,2)$ a closed (compact without boundary) spacelike hypersurface contained in $\Sigma_i$. It is formally the physical construction equivalent of putting the above fields in a box (with appropriate boundary conditions). Let $\Sigma^C_1 \uplus \Sigma^C_2$ denotes the disjoint union. Take $\mathcal{M}^C\subset\mathcal{M}$ a smooth compact four-dimensional manifold whose boundary is $\partial \mathcal{M}^C=\Sigma^C_1 \uplus \Sigma^C_2$. As $\mathcal{M}^C$ is endowed with a Lorentzian metric, the above structure is called a Lorentzian cobordism, with $\mathcal{M}^C$ smoothly interpolating between $\Sigma^C_i$'s. The existence of such a cobordism is usually used as a criterion in allowed quantum topology changes \cite{sor}. We believe it can also be helpful in our case, defining clear-cut criteria along with explicit drawbacks. It is shown in Ref. \cite{novomilnor} that a necessary and sufficient condition to the existence of a cobordism is the equality of the Stiefel-Whitney numbers of $\Sigma^C_i$'s. These are the topological criteria for allowed transitions experienced by the physical fields in our context. As it can be seen, the Stiefel-Whitney numbers of $\Sigma_1$ and $\Sigma_2$ are the same, and we shall expect the same thing to happen for $\Sigma^C_i$'s. The other technical points follow from the considerations we have just made. In general, Lorentzian cobordisms imply closed timelike curves \cite{outger}, which can be avoided by allowing the metric to degenerate at some points \cite{kin}. This last point, however, is certainly an aspect that lies beyond the scope of this paper.

Finally, different (noncompact or with boundary) cobordisms could be tried to define topological criteria for topology change, and even more radical approaches where cobordism is not essential can also be tried. These possibilities deserve attention in the future.

\subsection{Equations of motion of a nonlinear sigma model built upon a slightly deformed metric}

This subsection will discuss a usual nonlinear sigma model based upon a metric-deformed spacetime. The idea further explores the construction performed in Ref. \cite{np}. It is outside the main line of the paper but may also be faced as an application, though in a different (`geometrized') context. The target space is $SO(N+1)/SO(N)$, and the metric deformation is the one developed in Ref. \cite{np}. Since most of the ideas were developed in detail in \cite{np}, here we only introduce the main steps to arrive at the deformation.   

We start recalling that a given spacetime point $P\in \Sigma_3$ can be written in such a way that its coordinates are given in terms of spinor entries \cite{penro}, $x^\mu \thicksim (\zeta\zeta^*)^\mu$, where $\zeta$ is a spinor entry and $\zeta^*$ is its complex conjugate. Again, we stress that the formal details may be found in \cite{np} and it suffices to our purposes saying that $dx^\mu=d(\zeta\zeta^*)^\mu=\frac{\partial (\zeta\zeta^*)^\mu}{\partial_\nu}dx^\nu=(\partial_\nu\zeta\zeta^*+\zeta\partial_\nu\zeta^*)^\mu dx^\nu$. In $\Sigma_3$, the derivation process can now be taken backward, and no effect is expected. It is unclear whether the formalism of writing spacetime points as spinor entries condensate, so to speak, can be immediately extended to the other regions contemplated in this paper. In any case, given the procedure's validity in $\Sigma_3$, we now analyze its consequences as the $^\varphi\mathbb{R}^{1,3}$ region is approached. This reasoning leads to the $\varphi$-deformation `geometrization' possibility. Recall the discussion around Eq. (\ref{cor2}): the matrix part of the $Spin(1,3)$ mapping is just $\mathbb{1}_{M(4,\mathbb{C})}$ and, therefore, the mapping between $\tilde{P}$ and $P$ is taken entry to entry, just as the derivative deformation. That is to say, in $^\varphi\mathbb{R}^{1,3}$ we may write 
\begin{equation}
dx^\mu\mapsto \{(\partial_\nu\zeta+\varphi^{-1}\partial_\nu\varphi\zeta)\zeta^*+\zeta(\partial_\nu\zeta^*+\varphi^{-1}\partial_\nu\varphi\zeta^*)\}^\mu dx^\nu=dx^\mu+2x^\mu(\varphi^{-1}\partial_\nu\varphi) dx^\nu.\label{b1}
\end{equation} This replacement engenders a deformation in the $(^\varphi\mathbb{R}^{1,3})^*$ basis, which impacts differential form coefficients. In this vein, consider $\tilde{\eta}\in(^\varphi\mathbb{R}^{1,3})^*\otimes(^\varphi\mathbb{R}^{1,3})^*$: 
\begin{equation}  
\tilde{\eta}=\eta_{\mu\nu}(dx^\mu+2x^\mu\varphi^{-1}\partial_\alpha\varphi dx^\alpha)\otimes(dx^\nu+2x^\nu\varphi^{-1}\partial_\beta\varphi dx^\beta)
\end{equation} and hence, acting upon vectors $\{e_\mu\}$ (base elements of $^\varphi\mathbb{R}^{1,3}$) we have, after a bit of algebra\footnote{Taking into account the $\varphi$ function approximations dealt with in last Section.}, 
\begin{equation}
\tilde{\eta}(e_\mu,e_\nu)=\eta_{\mu\nu}+2\eta_{\mu\alpha}x^\alpha\varphi^{-1}\partial_\nu\varphi+2\eta_{\nu\alpha}x^\alpha\varphi^{-1}\partial_\mu\varphi\equiv \tilde{\eta}_{\mu\nu}.
\end{equation}

Now, we are in a position to analyze the nonlinear sigma model. As mentioned, take the target space as being the $N-$dimensional unitary sphere $S^N$ by means of $\{\Phi^I(x)\}$ fields set ($I=1,\cdots,N,N+1\equiv i,N+1$) respecting the $\Phi^I(x)\Phi^I(x)=1$ constraint. This enables a mapping $^\varphi\mathbb{R}^{1,3}\rightarrow S^N$ ($x^\mu\mapsto \Phi^i(x)$) and the free lagrangian $(1/2)\tilde{\eta}_{\mu\nu}\partial^\mu\Phi^I\partial^\nu\Phi^I$ gives rise to the sigma model (after imposition of the constraint) 
\begin{eqnarray} 
\mathcal{L}_\sigma=\frac{1}{2}\tilde{\eta}_{\mu\nu}g_{ij}(\Phi)\partial^\mu\Phi^i\partial^\nu\Phi^j,
\end{eqnarray} where $g_{ij}(\Phi)=\delta^{ij}+\Phi^i\Phi^j/(1-\Phi^k\Phi^k)$. The equations of motion can be obtained from the $\mathcal{L}_\sigma$ functional variation with respect to the $\Phi$ field. Thus, denoting $\partial_a=\partial/\partial\Phi^a$, we have  
\begin{eqnarray}
\delta\mathcal{L}_\sigma=\frac{1}{2}\tilde{\eta}_{\mu\nu}(\partial_k g_{ij}(\Phi))\delta\Phi^k\partial^\mu\Phi^i\partial^\nu\Phi^j+\tilde{\eta}_{\mu\nu} g_{ij}(\Phi)\partial^\mu(\delta\Phi^i)\partial^\nu\Phi^j.\label{b2}
\end{eqnarray} The second term may be recast 
\begin{eqnarray}\label{b3}
\tilde{\eta}_{\mu\nu}g_{ij}(\Phi)\partial^\mu(\delta\Phi^i)\partial^\nu\Phi^j&=&\left.\partial^\mu(\tilde{\eta}_{\mu\nu}g_{ij}(\Phi)\delta\Phi^i\partial^\nu\Phi^j)-10\varphi^{-1}\partial_\nu\varphi g_{ij}(\Phi)\partial^\nu\Phi^j\delta\Phi^i\right.\nonumber\\&-&\left.\tilde{\eta}_{\mu\nu}\partial^\nu\Phi^j\partial_k g_{ij}(\Phi)\partial^\mu\Phi^k\delta\Phi^i-\tilde{\eta}_{\mu\nu}g_{ij}(\Phi)\partial^\mu\partial^\nu\Phi^j\delta\Phi^i.\right.
\end{eqnarray} Disregarding the total derivative term and reinserting Eq. (\ref{b3}) back to (\ref{b2}), the equations of motion are obtained from $\delta\mathcal{L}_\sigma/\delta \Phi^i=0$. It can be readily verified that they read 
\begin{equation}
\tilde{\eta}_{\mu\nu}\partial^\mu\partial^\nu\Phi^p+\tilde{\eta}_{\mu\nu}\Gamma^p_{ij}(\Phi)\partial^\mu\Phi^i\partial^\nu\Phi^j+10k_\alpha\partial^\alpha\Phi^p=0,\label{b4}
\end{equation} where $\Gamma^p_{ij}(\Phi)$ stands for the (second kind) Christoffel symbols in terms of $g_{ij}(\Phi)$ and $k_\alpha$ is defined as before. In sequel, defining $A^{i\;\;\mu}_{\;k}(\Phi)\equiv \Gamma^i_{jk}(\Phi)\partial^\mu\Phi^j$ and $D^{i\;\;\mu}_{\;k}\equiv \delta^i_k\partial^\mu+A^{i\;\;\mu}_{\;k}(\Phi)$, the equations of motion can be written compactly as 
\begin{equation}
(\tilde{\eta}_{\mu\nu}D^{i\;\;\mu}_{\;k}+10k_\nu\delta^i_k)\partial^\nu\Phi^k=0,\label{b5}
\end{equation} suggesting the introduction of the `covariant derivative' $\nabla^{i\;\;\mu}_{\;k}=\tilde{\eta}_{\mu\nu}D^{i\;\;\mu}_{\;k}+10k_\nu\delta^i_k$. Of course, when $k_\alpha\rightarrow 0$ ($\tilde{\eta}_{\mu\nu}\rightarrow \eta_{\mu\nu}$), we recover the usual nonlinear sigma model covariant derivative operator, as expected, from which a target space Riemannian curvature may be read.   

\section{final remarks}


Following the previous results on the geometrization of topology in \cite{np}, we further investigate how deformations in spinor bundles, induced by the underlying topology of the base manifold, affect the physical properties of spinor fields. The behavior of the deformation function $\varphi(x)$ allowed us to model regions of the manifold where Lorentz and Poincar\`e symmetries are violated, leading to the emergence of unique physical characteristics.

One of the critical aspects of this study is the realization that the deformation function $\varphi(x)$ may creates distinct physical regions within the manifold, each with its own set of properties. In the region $\Sigma_1$, where the topology is nontrivial, the spinor fields exhibit behavior that is markedly different from that in regions $\Sigma_2$ and $\Sigma_3$. The exotic spinors in $\Sigma_1$ arise from a nontrivial spinor structure that cannot be smoothly connected to the standard spinor structure found in trivial topological regions. This profoundly impacts the physical observables associated with the spinor fields.

In the intermediate region $\Sigma_2$, where the deformation function $\varphi(x)$ is non-constant, we observe that the deformation influences the spinor fields in a way that violates Lorentz symmetry. The magnitude of this violation is directly related to the gradient of $\varphi(x)$, which acts as a source of symmetry-breaking. This finding is significant as it provides a new and concrete example of how topological features of the manifold can lead to observable physical effects. Furthermore, the analysis of the current conservation and the dynamical equations for the spinor fields revealed that the deformation induced by $\varphi(x)$ introduces additional terms that manifest as optical potential-like terms, which can lead to quasinormal modes in the spinor field solutions. The presence of such modes suggests that the deformed spinor fields exhibit dissipative behavior, a feature that is typically associated with systems losing energy, such as perturbed black holes or other dissipative physical systems.

The final region $\Sigma_3$, characterized by constant $\varphi(x)$, serves as a baseline for understanding the effects of the deformation. In this region, the spinor fields revert to their standard form, and the physical consequences of the deformation are minimized. However, even in this region, the residual effects of the deformation can be observed. The analysis of the magnetic moment via exotic spinors in $\Sigma_3$ revealed that the deformation function $\varphi(x)$ plays a crucial role in determining the magnitude of the magnetic moment, with potential implications for the study of neutral particles and their interactions with electromagnetic fields.

In conclusion, this paper has demonstrated that deformations in spinor dynamics, driven by the topology of the underlying manifold and the introduction of the deformation function $\varphi(x)$, have significant and far-reaching consequences for the physical properties of spinor fields. The results contribute to enlightening the understanding of the interplay between topology, symmetry, and geometry in theoretical high-energy physics. Future research could expand these findings by exploring more physical scenarios where spinors play a major role or comparing the results to other Lorentz violation scenarios.

\section*{Acknowledgments}

JMHS thanks to CNPq (grant No. 307641/2022-8) for financial support and GMCR thanks to CAPES for financial support.  

\section*{Appendix: A useful associated Hausdorff measure}

This appendix is devoted to studying a measure of the local deformation degree in $^\varphi\mathbb{R}^{1,3}$ aimed to arrive at a slowly varying $\varphi$-deformation criteria compatible with the idea of $^\varphi\mathbb{R}^{1,3}$ as an intermediate region. To avoid complications, consider the Euclidean\footnote{We shall not be concerned with spinor bundles in this appendix.} version of $^\varphi\mathbb{R}^{1,3}$, say $^\varphi\mathbb{R}^{4}$, in which the concept of open balls is intuitive. Since the region of interest is a part of the whole base manifold, the very concept of the Hausdorff measure furnishes an appropriate tool. We shall start recalling a few definitions \cite{roy,elon} leading to the concept of the Hausdorff measure.  

\begin{definition}
	Let $K\subset \mathbb{R}^n$. The diameter of $K$, $|K|$, is the supremum of distances between two of their points:
	\begin{equation}
	|K|=\sup \{||x-y||\mid x,y\in K\}.
	\end{equation} 
\end{definition}	

\begin{definition}
	A $\delta$-covering for $K\subset \mathbb{R}^n$ is an enumerable collection of sets $\{x_\gamma\}$ with maximum diameter $\delta$ covering $K$, i.e.,
	\begin{equation}
	\{x_\gamma\mid K\subset \cup_{\gamma=1}^\infty x_\gamma \;\, and \;\, 0<|x_\gamma|\leq\delta \;\forall \gamma \}.
	\end{equation}
\end{definition} Consider $s\geq 0$ and $K\subset \mathbb{R}^n$. The following definition arises in searching for coverings of $K$ restricted to small values for $\delta$.

\begin{definition}
	The $s$-dimensional Hausdorff measure of $K$ reads 
	\begin{equation}
	\mathcal{H}^s(K)=\lim_{\delta\rightarrow 0} \inf\Big\{\sum_{\gamma=1}^\infty|x_\gamma|^s\mid \{x_\gamma\} \, \text{be a $\delta$-covering for $K$} \Big\}.\label{def3}
	\end{equation}
\end{definition} As it is well known, the $s$ parameter may be related to the Hausdorff dimension when it reaches a critical value. In the sequel, it shall have no impact on our discussion, except when we recover the standard case, where its interpretation is the usual one. 

Consider now, $\{U_\alpha\}$ a disjoint countable covering of open sets covering $^\varphi\mathbb{R}^{4}$, and $\{U_{\alpha\beta}\}$ a disjoint countable refinement of open sets $\cup_{\beta=1}^\infty U_\beta$ covering a given $U_\alpha$. In the sequel, define what we call equipped local set $(1-\varphi_\alpha)U_\alpha:=\cup_{\beta=1}^\infty(1-\varphi_{\alpha\beta})U_{\alpha\beta}$, where $(1-\varphi_{\alpha\beta})U_{\alpha\beta}=\{(1-\varphi_{\alpha\beta}(x))(x)\mid x\in U_\beta\subset U_\alpha\}$ and $\varphi \in (0,1)$ for $x\in\,^\varphi\mathbb{R}^4$ is the deformation function. Let us elaborate on the Hausdorff measure associated with the equipped local set. By construction, $(1-\varphi_{\alpha\beta})U_{\alpha\beta}\simeq \mathbb{R}^4$ (locally, as it is clear) for each $\beta$ and then, since the covering is assumed disjoint, we have       	 
\begin{eqnarray}
\mathcal{H}^s((1-\varphi_\alpha)U_\alpha)=\sum_{\beta=1}^\infty\mathcal{H}^s((1-\varphi_{\alpha\beta})U_{\alpha\beta}). \label{ape1}
\end{eqnarray} It is straightforward to see that the deformation function contracts the equipped local set. In the limit of $\varphi_{\alpha\beta}\rightarrow 1$ for some $U_\beta$, the equipped local set is shrunk to the empty set and, due to the presence of supremum and infimum in the measure definition, $\mathcal{H}^s((1-\varphi_\alpha)U_\alpha)$ became ill-defined. Some problem is to be expected in the frontier between $^\varphi\mathbb{R}^4$ and $\Sigma_1^E$, where the superscript $E$ stands for the Euclidean version of $\Sigma_1$. In any case, the equipped local set concept of Hausdorff measure, as we built it, does not apply to $\Sigma_1^E$ and is quite usual to $\Sigma_3^E$ (where $\varphi$ is constant different from $1$ for every open set). The novelty and usefulness of this concept to our purposes is regarded to $^\varphi\mathbb{R}^4$. 

To further proceed, take $J$ as the counter set of $\beta$ values for which $\varphi_{\alpha\beta}$ is constant ($0<1-\varphi_{\alpha(\beta\in J)}<1$) in $U_\beta$. Therefore, the $J$ set of index selects those elements of the covering $\{U_\beta\}$ with constant deformation, though different open sets may have different constant deformation values. Now, we shall adapt a standard result to our case.  
\begin{prop}
$\mathcal{H}^s((1-\varphi_\alpha)U_\alpha)=\sum_{\beta\in J}(1-\varphi_{\alpha\beta})^s\mathcal{H}^s(U_{\alpha\beta})+\sum_{\beta\notin J}\mathcal{H}^s((1-\varphi_{\alpha\beta})U_{\alpha\beta})$. 
\end{prop}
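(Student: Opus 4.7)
The plan is to start from the additivity statement (\ref{ape1}), which gives $\mathcal{H}^s((1-\varphi_\alpha)U_\alpha)=\sum_{\beta=1}^\infty \mathcal{H}^s((1-\varphi_{\alpha\beta})U_{\alpha\beta})$, and then split the sum into the two subseries indexed by $\beta\in J$ and $\beta\notin J$, which is legitimate because the index set is countable and the series is a sum of nonnegative terms. Once the splitting is done, the only nontrivial work is to evaluate the $\beta\in J$ summands in closed form, while the $\beta\notin J$ summands are simply kept as they are because $\varphi_{\alpha\beta}$ cannot be pulled out of the Hausdorff measure in that case.

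For each $\beta\in J$ the function $\varphi_{\alpha\beta}$ is a constant on $U_\beta$ lying in $(0,1)$, so $(1-\varphi_{\alpha\beta})U_{\alpha\beta}$ is the image of $U_{\alpha\beta}$ under the homothety $x\mapsto (1-\varphi_{\alpha\beta})x$ of $\mathbb{R}^4$ with dilation factor $\lambda_{\alpha\beta}:=1-\varphi_{\alpha\beta}\in(0,1)$. The key auxiliary step is therefore the classical scaling law for the Hausdorff measure, namely $\mathcal{H}^s(\lambda K)=\lambda^s\,\mathcal{H}^s(K)$ for $\lambda>0$ and $K\subset\mathbb{R}^n$. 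This is a short calculation from Definition 3: any $\delta$-covering $\{x_\gamma\}$ of $K$ produces a $(\lambda\delta)$-covering $\{\lambda x_\gamma\}$ of $\lambda K$ with $|\lambda x_\gamma|=\lambda |x_\gamma|$, so the infimum of $\sum |\lambda x_\gamma|^s$ equals $\lambda^s$ times the infimum of $\sum |x_\gamma|^s$; taking $\delta\to 0$ gives the equality. Applied to each $U_{\alpha\beta}$ with $\lambda=\lambda_{\alpha\beta}$ this yields $\mathcal{H}^s((1-\varphi_{\alpha\beta})U_{\alpha\beta})=(1-\varphi_{\alpha\beta})^s\mathcal{H}^s(U_{\alpha\beta})$ for every $\beta\in J$.

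Substituting this into the split series completes the argument:
\begin{eqnarray}
\mathcal{H}^s((1-\varphi_\alpha)U_\alpha)&=&\sum_{\beta\in J}\mathcal{H}^s((1-\varphi_{\alpha\beta})U_{\alpha\beta})+\sum_{\beta\notin J}\mathcal{H}^s((1-\varphi_{\alpha\beta})U_{\alpha\beta})\nonumber\\
&=&\sum_{\beta\in J}(1-\varphi_{\alpha\beta})^s\mathcal{H}^s(U_{\alpha\beta})+\sum_{\beta\notin J}\mathcal{H}^s((1-\varphi_{\alpha\beta})U_{\alpha\beta}).\nonumber
\end{eqnarray}

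The main obstacle I anticipate is not the algebra but the careful handling of borderline behaviour of $\varphi_{\alpha\beta}$. When $\varphi_{\alpha\beta}$ approaches $1$ on some $U_\beta$, the discussion preceding the proposition warns that the equipped local set degenerates and $\mathcal{H}^s$ becomes ill-defined; implicitly the statement assumes one stays inside $^\varphi\mathbb{R}^4$ with $\varphi_{\alpha\beta}\in(0,1)$ strictly, so that $\lambda_{\alpha\beta}>0$ and the homothety argument is valid. One should also note that the refinement $\{U_{\alpha\beta}\}$ is disjoint, so the scaled pieces $(1-\varphi_{\alpha\beta})U_{\alpha\beta}$ remain pairwise disjoint (each lives inside its own $U_{\alpha\beta}$), which is what legitimates the countable additivity used in (\ref{ape1}) and hence the term-by-term splitting above.
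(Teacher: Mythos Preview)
Your proof is correct and follows essentially the same approach as the paper: start from the additivity in (\ref{ape1}), split the sum over $J$ and its complement, and for $\beta\in J$ invoke the scaling law $\mathcal{H}^s(\lambda K)=\lambda^s\mathcal{H}^s(K)$, established via the bijection between $\delta$-coverings of $K$ and $\lambda\delta$-coverings of $\lambda K$. The paper's proof is slightly terser---it jumps straight to the $\beta\in J$ computation and makes the covering correspondence via the substitution $\xi=\delta/(1-\varphi_{\alpha\beta})$---but the argument is the same; your only inessential slip is the parenthetical claim that each scaled piece $(1-\varphi_{\alpha\beta})U_{\alpha\beta}$ ``lives inside its own $U_{\alpha\beta}$,'' which need not hold for a dilation about the origin, though this does not affect the proof since (\ref{ape1}) is taken as given.
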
	
\begin{proof}
	Starting from the very definition (\ref{def3}), we have
	\begin{equation}
	\sum_{\beta\in J}\mathcal{H}^s((1-\varphi_{\alpha\beta})U_{\alpha\beta})=\sum_{\beta\in J}\lim_{\delta\rightarrow 0}\inf\Big\{\sum_{\gamma=1}^\infty|(1-\varphi_{\alpha\beta})x_\gamma|^s \Big\},
	\end{equation} which amounts out to 
	\begin{equation}
	\sum_{\beta\in J}\mathcal{H}^s((1-\varphi_{\alpha\beta})U_{\alpha\beta})=\sum_{\beta\in J}(1-\varphi_{\alpha\beta})^s\lim_{\delta\rightarrow 0}\inf\Big\{\sum_{\gamma=1}^\infty|x_\gamma|^s \Big\}. \label{app2}
	\end{equation} Now, from the $\delta$-covering condition, one sees that $0<|x_\gamma|<\delta/(1-\varphi_{\alpha\beta})$ and, with the conditions in vogue, we have (calling $\xi:=\delta/(1-\varphi_{\alpha\beta})$) that $\lim_{\delta\rightarrow 0}(\cdots)=\lim_{\xi\rightarrow 0}(\cdots)$, as far as the limit exists. Therefore every $\delta$-covering of $(1-\varphi_{\alpha\beta})U_{\alpha\beta}$ can be written as a $\xi$-covering of $U_{\alpha\beta}$, enabling to change the limit in (\ref{app2}) to $\xi$, arriving at 
	\begin{equation}
	\sum_{\beta\in J}\mathcal{H}^s((1-\varphi_{\alpha\beta})U_{\alpha\beta})=\sum_{\beta\in J}(1-\varphi_{\alpha\beta})^s\mathcal{H}^s(U_{\alpha\beta}).
	\end{equation}
\end{proof}	To finalize, we would like to point out a simple criteria for slowly varying deformation, based on the previous considerations: for $x\in U_\beta\subset U_\alpha$ a slowly varying $\varphi_{\alpha\beta}(x)$ function is (in an excellent approximation) constant for every $U_\beta$. Since the deformation may be a different constant for different open sets, the slope is low enough that the above sentence is still valid. In other words, we shall consider as exhaustive criteria of slowly varying deformation every $\beta$ belonging to $J$, so that the equipped local set Hausdorff measure may be entirely written as $\sum_{\beta\in J}(1-\varphi_{\alpha\beta})^s\mathcal{H}^s(U_{\alpha\beta})$. These considerations suffice to support the assumptions made in section IV.

\end{document}